\def\be{\begin{equation}}
\def\ee{\end{equation}}
\def\ba{\begin{array}{c}}
\def\ea{\end{array}}
\def\ben{$$}
\def\een{$$}
\newcommand{\bea}{\begin{eqnarray}}
\newcommand{\eea}{\end{eqnarray}}
\newcommand{\kt}{\rangle}
\newtheorem{thm}{Theorem}
\newtheorem{lemma}[thm]{Lemma}
\newtheorem{defn}[thm]{Definition}
\newenvironment{proof}{\noindent
 {\bf Proof.}}{\hfill$\square$\vspace{3mm}\endtrivlist}
\begin{document}

\begin{center}

{\Large \bf

Non-Hermitian
$N-$state
degeneracies: unitary
realizations via
antisymmetric anharmonicities

}

\vspace{0.8cm}

  {\bf Miloslav Znojil}

\vspace{0.2cm}

The Czech Academy of Sciences, Nuclear Physics Institute,

 Hlavn\'{\i} 130,
250 68 \v{R}e\v{z}, Czech Republic

\vspace{0.2cm}

 and

\vspace{0.2cm}

Department of Physics, Faculty of Science, University of Hradec
Kr\'{a}lov\'{e},

Rokitansk\'{e}ho 62, 50003 Hradec Kr\'{a}lov\'{e},
 Czech Republic

\vspace{0.2cm}

{e-mail: znojil@ujf.cas.cz}

\end{center}



\section*{Abstract}

In a way inspired by the phenomenon of
quantum phase transitions
two mutually interrelated aspects of their theory
are addressed. The first one is pragmatic:
a non-numerical picture
of the unitary
processes of the loss of the observability
is offered. A specific anharmonic-oscillator (AHO) class of the
real $N$ by $N$ matrix toy-model Hamiltonians
is recalled and adapted for the purpose.
The second aspect is theoretical:
the non-Hermitian degeneracy of
an $N-$plet of the stable bound states
(a.k.a.
the Kato's exceptional-point - EPN - degeneracy)
is realized in the language of unitary
(called, sometimes, ${\cal PT}-$symmetric)
quantum physics of closed systems.
What is obtained
is a classification
pattern involving
an auxiliary integer $K$
(prescribing a ``clusterization index''
{\it alias\,} geometric multiplicity of
the EPN limit) and the choice of one of the partitionings
of the equidistant unperturbed spectrum
into equidistant and centered unperturbed
subspectra.

\subsection*{Acknowledgments}

The author acknowledges the financial support from the
Excellence project P\v{r}F UHK 2020.

\newpage

\newpage

\section{Introduction}

The practical, experiment-oriented
descriptions of processes of
quantum phase transitions
mediated by non-Hermitian degeneracies \cite{Berry}
{\it alias\,} exceptional points (EP, \cite{Kato})
are usually based,
in the Feshbach's
open-system spirit \cite{Feshbach},
on an {\it ad hoc\,}
effective Hamiltonian $H_{\rm eff}$.
In these applications
of the idea the systems in question are usually
next-to-prohibitively complicated
so that their
analysis usually requires various forms of
a simplification
of mathematics.
For this reason the criteria of the
trial-and-error choice of $H_{\rm eff}$
are mostly purely pragmatic and not too deeply
motivated.
A more ambitious
theoretical background, if any,
is
provided {\it a posteriori}.
Typically,
the necessary omission of
all of the irrelevant degrees of freedom
is made via perturbation-theory-based tests of their
negligibility \cite{Nimrod}.

The dominance of the underlying open-system
philosophy has been shattered by Bender and Boettcher \cite{BB}.
They proposed that
at least some of the processes of the
quantum phase transitions
(and, in particular, the
loss-of-the-observability processes)
might find a more natural
description and explanation in the alternative,
closed-system theoretical
framework (see some of its details in reviews
\cite{Carl,ali,book}).
The Bender- and
Boettcher-inspired change of the paradigm has two roots.
One is that even in many closed
quantum systems the unitary evolution
can be controlled by the Hamiltonian
(with real spectrum)
which is non-Hermitian
(i.e., admitting the EPs).
Although such a conjecture might sound like a paradox,
the Bender's and Boettcher's
secret trick is that the latter Hamiltonian
can be,
via an
appropriate amendment of the Hilbert space
of states,  hermitized,
fitting suddenly all of the standard postulates
of textbooks
(see, e.g., the older review paper \cite{Geyer}
for explanation).

The second root of the change of the paradigm
immediately concerns the problem of the
generic
quantum phase transitions,
opening one of its possible innovative treatments.
The point is that in \cite{BB}
the  phase transitions were
sampled
by the spontaneous breakdown of
parity-time (i.e., ${\cal PT}$) symmetry
of elementary models.
In such a specific class of exemplifications of the
phenomenon
it was not too difficult to relate
the
collapse
of the system
to the coincidence of parameter $\lambda$ in Hamiltonian
$H(\lambda)$ with its
EP
value
$\lambda^{(EP)}$.
The
difference in behavior
between the
open and closed systems proves inessential.
The closed, unitary systems do
encounter, in the
phase-transition limit
$\lambda \to \lambda^{(EP)}$, a
``catastrophe'' realized as a
merger followed by an abrupt
complexification of at least some (or, in general,
of an $N-$plet \cite{arbiorder})
of the
bound-state energies, i.e., by
an abrupt loss of the
observability of the system.

Within the framework of the new
hermitizable-Hamiltonian
paradigm,
one of the most
important ingredients
in the upgraded theory
would be an exhaustive
classification of the
possible new evolution scenarios.
In particular,
any type of such a classification seems needed
for our understanding of
the onset-of-the-phase-transition
dynamical regime.
Unfortunately,
the absence of a similar
classification in the literature reflects the
existence of the numerous
technical obstacles. {\it Pars pro toto\,} let us mention
the scepticism (formulated, e.g., in \cite{bhsmall})
concerning the very feasibility of
analysis of alternative
loss-of-the-observability mechanisms,
or the fairly discouraging results of
several numerical experiments
simulating, in \cite{anomalous},
the non-Hermitian $N-$level degeneracies
at the values of $N$ as small as $N= 6$.

In our present paper we are going to oppose the
scepticism. Our return to optimism has
several independent roots. The most important one is that
we imagined that
there exist enough toy-model matrices which
could help forming certain building blocks of
a sufficiently
rich descriptive, qualitative theory.
In this sense we will construct
a family of
non-numerical benchmark models exhibiting
the explicit
loss-of-the-observability behavior.

The presentation of our results will start, in section
\ref{specia}, by an identification of
an arbitrary non-Hermitian degeneracy of $N$ levels with
a sudden loss of the diagonalizability of the Hamiltonian.
The main goal of our paper may be formulated as a classification
of these EPs of the $N-$th order (or, more explicitly, EPNs)
which would take into account the so called geometric multiplicity $K$
of the $N-$plet of levels in the EPN phase-transition dynamical extreme.

In section \ref{specibe} a key concept of our considerations,
viz., a
certain specific class of anharmonic oscillators is
introduced. In the model
the conventional
matrix form of the (initially dominant) harmonic oscillator
Hamiltonian
is diagonal (with equidistant elements) while the
anharmonicity (i.e., initially, a small perturbation)
is a real matrix which is
antisymmetric. The reasons for such a choice are given.
Subsequently, in section
\ref{specice}, our present systematic search for
the models with anomalous non-Hermitian degeneracies
(i.e., EPNs with $K \neq 1$) is initiated.
We start there from a
five-diagonal $N$ by $N$
matrix ansatz
for $H(\lambda)$,
and we illustrate the emerging picture
of the EPN-related physics
via a specific $N=7$ model.
On a methodical level we finally add and explain the last
technical ingredient of our present analysis, viz.,
the idea of a systematic reduction of the
EPN-supporting Schr\"{o}dinger equations
based on suitable multidiagonal versions of the
anharmonic oscillator Hamiltonians
into a composition of decoupled
(or weakly coupled) sub-equations:
this is done in sections
\ref{specide} and
\ref{specife}
followed by the last section \ref{speciae}
containing a concise discussion
and a brief summary.

\section{Mechanisms of the loss of diagonalizability \label{specia}}

The above-outlined danger of
a non-Hermitian degeneracy and collapse
concerns, in particular,
quantum systems living
in a small vicinity of
their EP
boundary where
$\lambda = \lambda^{(EP)}$.
The current state of our understanding
of these systems
has recently been reviewed
in the immediate predecessor \cite{anomalous}
of our present paper.
First of all
it has been emphasized there that near the
EP-mediated
loss-of-observability
limit
the quantum bound-state problem
becomes,
in a way well known to mathematicians
\cite{Wilkinson,Trefethen},
numerically ill-conditioned.
As a consequence, whenever
$\lambda \approx \lambda^{(EP)}$,
the spectra of the bound-state energies
become extraordinarily
sensitive even to the smallest
random numerical errors
\cite{admissible}.

Fortunately, the
apparent paradox and conceptual problem immediately disappears
when we
recall any formulation of quantum mechanics which is
applicable to the situation (let us recommend, e.g.,
the extensive but well-written review paper \cite{ali}).
For a clarification of the situation
it is necessary to distinguish between the
computer-generated random numerical errors
(which are defined as small in a conventional ``working''
Hilbert space ${\cal K}$)
and
the real-world perturbations which
must be realized and specified as
small  in the physics-representing
Hilbert space ${\cal H}$ \cite{Ruzicka}.

In
paper~\cite{anomalous}
the necessity of a reliable control of numerical
errors
led to the restriction of the scope of the paper to
a few toy-model
matrices with $N=6$.
Even within such a reduced project
the practical control of the numerical precision
remained nontrivial.
In similar situations
one must mostly rely upon
the results obtained by
non-numerical means.

\subsection{Elementary two by two matrix example}

One of the most efficient simplifications
of the implementation of quantum theory
is provided by the
finite-matrix
$N$ by $N$ truncation of the Hamiltonians.
Their finite-dimensional
{\it alias\,} separable-interaction representations prove
methodically useful even at $N=2$.
One of the most persuasive demonstrations
of such a remark may be found
in the Kato's
book on perturbation theory \cite{Kato}.
In it,
the mathematically rigorous definition of the EPs (see
p. 64 in {\it loc. cit.})
is immediately followed by an elementary illustrative example
 \be
 T^{(2)}(\kappa)
 = \left [\begin {array}{cc} 1&\kappa\\{}\kappa&-1
 \end {array}\right ]\,
 \label{kateq}
 \ee
This is a traceless two-by-two matrix
with eigenvalues
$E_\pm(\kappa) = \pm \sqrt{1+\kappa^2}$
[cf. Example 1.1.(a) in {\it loc. cit.}] which
could be perceived as
the simplest possible
truncated version of our present anharemonic-oscillator
models (cf. Eq.~(\ref{pen}) below).
By Kato \cite{Kato}, in contrast, the model
was recalled as yielding the complex conjugate
pair of the EPs
characterized by the
degeneracy of
the levels,
 \be
 \lim_{\kappa \to \kappa^{(EP)}}\,E_+(\kappa)=
 \lim_{\kappa \to \kappa^{(EP)}}\,E_-(\kappa)=0\,,
 \ \ \ \
  \kappa^{(EP)} =  {\rm i} \ \ {\rm and/or}\ \
  \kappa^{(EP)} =  -{\rm i}
  \,.
 \label{sies}
 \ee
In the context of physics of closed, unitary quantum
systems (which will be of primary interest in the present paper)
one has to add that
the benchmark matrix (\ref{kateq}) is Hermitian
(i.e.,
from the point of view of
conventional textbooks \cite{Messiah}, acceptable as a
toy-model Hamiltonian)
if and only if $\kappa$ is real.
From such a perspective,
the unitarity
of the system
becomes lost near all of its EPs.
In model (\ref{kateq}) this makes the EPs much less
interesting for physicists of course.

\subsection{Physics behind EPs}

The inaccessibility of the EPs
is not a model-dependent anomaly. Its validity
extends to
any generic Hermitian $N$ by $N$ matrix
$T^{(N)}(\kappa)$ without {\it ad hoc\,} symmetries.
In the past this made the concept of EPs, from the
phenomenological point of view, useless.
From the present point of view, the scepticism
was undeserved,
contradicting
multiple subsequent
discoveries of
relevance
of the non-Hermitian dynamics and
of the related EPs. Multiple emerging applications
range from the
condensed-matter theory \cite{Dyson} and from the
diverse branches of
experimental physics \cite{Geyer,Heiss}
up to
the
quantum
analogues of the classical theory of
catastrophes \cite{catast}.

Any introductory list of the concrete samples of the existence
and experimental or theoretical relevance of
quantum phenomena connected with
the presence and proximity of the Kato's EPs
in Hamiltonians $H(\lambda)$
at
$\lambda = \lambda^{(EP)}$
should probably start from the recollection of the
EP-related spontaneous breakdown
of the parity-time symmetry
as mentioned by
Bender and Milton \cite{BM}
(in the context of quantum field theory), and by
Bender with Boettcher \cite{BB}
(in a mathematically
much better understood context of quantum
mechanics using non-selfadjoint
operators with the real and discrete
bound-state spectra \cite{ali,book,Geyer}).
At present an updated list of the samples
of the use of EPs in quantum physics would certainly
involve various
quantum phase transitions \cite{denis} and
catastrophes \cite{catast}. What they would all
share is a connection,
direct or indirect, with the formal
limit
 \be
 H(\lambda^{(EP)})=
 \lim_{\lambda \to
 \lambda^{(EP)}} H(\lambda)\,
 \label{esist}
 \ee
of the
Hamiltonian.

In the subset of samples
of the applicability of EPs of our present interest
(dealing, exclusively, with the unitary, {\em closed\,} quantum
systems) the most important mathematical
feature of operator $H(\lambda)$ should be seen in
the reality of its spectrum
(in an ``interval of unitarity''
with $\lambda < \lambda^{(EP)}$),
guaranteed
up to the very EP limit (\ref{esist}).
At the same time,
the limit $H(\lambda^{(EP)})$ itself
cannot be perceived as a valid quantum Hamiltonian anymore.
The main reason is that this operator
is, by definition, non-diagonalizable.
This means that  in
an immediate vicinity of the EP singularity
one should expect the emergence of multiple
interesting
phenomena.

A firm theoretical ground for their description
should be sought in the above-mentioned
amendment ${\cal H}$ of
the conventional Hilbert space ${\cal K}$.
Whenever
the spectrum of any given
$H(\lambda)$ is kept real, such a
Hamiltonian can be non-Hermitian in  ${\cal K}$ but still
Hermitian in the other,
{\it ad hoc\,} Hilbert space ${\cal H}$.
As a consequence,
the evolution generated by the Hamiltonian
may always be interpreted as unitary
(cf., e.g., the thorough review \cite{ali} for details).

The amended Hilbert space will be
$\lambda-$dependent, ${\cal H}={\cal H}(\lambda)$.
Thus, Hamiltonian $H(\lambda)$ itself may be interpreted as
Hermitian in a corridor of $\lambda$s
having a non-empty overlap with an arbitrarily small vicinity
of $\lambda^{(EP)}$ \cite{corridors}. The system is able to reach
the instant of phase transition via unitary evolution.

\subsection{Exceptional point degeneracies and their classification}

Most of the
quantitative analyses
of the EP-influenced dynamical
scenarios have been found technically difficult, especially
when one tries to move beyond the most elementary
models with the smallest matrix dimensions $N$.
There exist several sources of difficulties.
First, for a given non-Hermitian Hamiltonian
$H^{(N)}(\lambda)$,
even the proof of the reality of the spectrum ceases to
be an elementary task.
Even at $N=3$,
the availability of the energies in the exact and
closed form of Cardano formulae is often marred by
a typical occurrence of the mutually canceled complex components
in the formula causing the emergence of
numerical errors \cite{anomalous}.
Second, once we manage to
keep the numerical uncertainties of the energies
under control,
this control becomes more and more difficult
when the parameter $\lambda$ moves closer to its EP
value. Third,
even when we succeed in the precise
localization of the position of the EP
parameter,
we have to keep in mind that
the size of
difficulties will increase with the
growth of dimension $N$,
especially when we study the EPs
of maximal order $N$ (abbreviated as EPNs).

Along all of these stages of development
an efficient help can be provided by
a variable-length computer
arithmetics (see, e.g., \cite{bhgt6}).
Even then,
one still has to complement the necessary condition
of the EPN confluence of {\em all\,} of the $N$ energy levels,
 \be
 \lim_{\kappa \to \kappa^{(EPN)}}\,E_n(\kappa)=\eta\,,\ \ \ \
 n=1,2,\ldots,N
  \,
 \label{siesta}
 \ee
by a more detailed characteristics of the
structure of the parallel confluence of the wave functions.
The general form of the latter confluence will depend on
an integer (say, $K$) called the
geometric multiplicity of the EPN degeneracy~\cite{Kato}.
It will characterize
the following
$K-$centered
clusterization of the eigenstates near the EPN singularity,
 \be
 \lim_{\lambda \to \lambda^{(EPN)}}\,|\psi_{n_k}^{(N)}(\lambda)\kt
 =|\chi_k^{(N)}(\lambda)\kt\,,\ \ \  \ \
 n_k \in S_k\,,
 \ \ \ \
 k=1,2,\ldots,K\,.
 \label{kwinde}
 \ee
In this formula the
$N-$plet of integer subscripts $\{1,2,\ldots,N\}$
counting the states
gets partitioned
into a $K-$plet of its disjoint subsets $S_k$
formed by the $N_k-$plets of the separate indices.
Nontriviality of the situation requires that $N_k\geq 2$
(indeed, the singlets can be ignored as
belonging to an irrelevant, decoupled
part of the Hilbert space).
Finally, once the overall dimension $N$ is fixed,
the separate non-equivalent EPN scenarios
may be numbered
by the set of all possible
partitions
$P[N]=P_j[N]$ of
     $$N=N_1+N_2+\ldots + N_K\,.$$
A short table of all realizations of these partitions
may be found in \cite{without,Acc}.

\section{Antisymmetrically anharmonic oscillators \label{specibe}}

In the area of non-Hermitian quantum mechanics, a
number of methodical challenges
occurred, recently, in connection with the
mathematically rigorous
studies of the
onset of instabilities due to perturbations \cite{Viola}.
In parallel, physicists newly reopened
the questions of
a systematic qualitative understanding
of the EP-related
quantum
phase transitions \cite{BB,denis,denisbe}.
Both of these tendencies in research
are mutually interconnected of course.

\subsection{Conventional anharmonic oscillators}

In the most convenient harmonic-oscillator (HO) basis,
the general anharmonic-oscillator (AHO) Hamiltonian
 \be
 H(\lambda)=H^{(HO)}
 + \lambda\,V_{\rm }\,,
 \ \ \ \ \ \lambda \geq  0\,
 \label{peng}
 \ee
acquires, in the weak-coupling regime, the
diagonally dominated matrix form
 \be
 H(\lambda)
 =\left[ \begin {array}{cccc}
  1+\lambda\,V_{0,0}&\lambda\,V_{0,1}
  &\lambda\,V_{0,2}&\ldots
  \\{}\lambda\,V_{1,0}&3+\lambda\,V_{1,1}
  &\lambda\,V_{1,2}
 &\ddots
 \\{}\lambda\,V_{2,0}&\lambda\,V_{2,1}
 &5+\lambda\,V_{2,2}&\ddots
 \\{}\vdots&\ddots&\ddots&\ddots
 \end {array} \right]\,,\ \ \ \ \
 \lambda =
 {\rm small}
 \,.
 \label{pen}
 \ee
The study of the AHO models was initially motivated
by their methodical perturbation-analysis implications.
The unperturbed
Hamiltonian itself is represented,
in the well known harmonic-oscillator basis,
by the diagonal matrix
with equidistant elements $H^{(HO)}_{0,0}=1$,
$H^{(HO)}_{1,1}=3$, etc.
The couplings $\lambda$
remain, in this setting, small and positive,
$\lambda \in (0,\lambda_{\max})$.

The most popular choice of
the coordinate-dependent
quartic anharmonicity  $V \sim x^4$
converts the Hamiltonian into a
particularly computation-friendly real pentadiagonal matrix.
In \cite{BW}, this encouraged
Bender and Wu to give a detailed account
of the positions of EPs .
The motivation of this early study was still formal,
based on the fact that
one of the most important mathematical prerequisities of
perturbation expansions, viz., a guarantee of convergence
finds a clarification after
an extension
of the range of the admissible
couplings
to
complex plane \cite{Kato}.
The
value of the radius of convergence
of the most common version of the
Rayleigh-Schr\"{o}dinger perturbation series
can be then identified with the distance of the
reference coupling $\lambda_0$
from the nearest EP
singularity.

For the general and popular Hermitian Hamiltonians,
{\em none\,} of its EP singularities
$\lambda^{(EP)}$ can be real.
Some of them may be made real via
an analytic continuation of the model in $\lambda$.
Naturally,
what seems to be inadvertently lost is the self-adjointness of the
Hamiltonian.
Still, surprisingly enough,
there exist certain specific models in which
it is possible to recover the self-adjointness.
One of the best proofs of the existence of such an option
is even provided by the above-mentioned
quartic anharmonic model with  $V \sim x^4$.
In it, the non-Hermiticity of $H(\lambda)$ can most easily be achieved
via a counterintuitive choice of a negative $\lambda=-\kappa^2<0$.
Indeed, after
such a change of the sign, utterly unexpectedly,
the spectrum $\{E_n(-\kappa^2)\}$ is found real
again \cite{BB,BG,Jones}.

\subsection{Hiddenly Hermitian Hamiltonians}

In 1993, the occurrence of the latter surprise
was related, in
a marginal remark~\cite{BG}, to the
${\cal PT}-$symmetry of the non-Hermitian model. A few years later,
Bender with coauthors \cite{BB,Carl}
converted the remark into a new paradigm. An innovated,
more flexible formulation of
quantum theory has been born
admitting the existence
of unitary quantum evolution generated by non-Hermitian
Hamiltonians \cite{ali}.
The status of EPs has thoroughly
been changed. It evolved from the mere mathematical curiosity to
the concept of a central phenomenological interest \cite{book}.

There exist at least two keys
to the disentanglement of these
developments.
The main one makes use of
the correspondence
between the conventional Hermitian Hamiltonians
and their upgraded versions
called quasi-Hermitian \cite{Geyer},
pseudo-Hermitian \cite{ali},
crypto-Hermitian \cite{Smilga}, or
non-Hermitian but hermitizable \cite{Carl}.
Within this framework,
the second key to a correct description of the
EP-related quantum physics is more
pragmatic, aimed at a simplification of the
underlying mathematics. In this spirit,
several authors
suggested to
circumvent some of the technical obstacles
via an
{\it ad hoc}, apparently redundant
assumption like, say, a
pseudo-bosonization
of the operators of observables  \cite{Fabio}
(for a few reviews of some further possibilities
see, e.g., \cite{book,ATbook}).

The second key to
the necessary amendment of the conventional model-building
strategy is
model-dependent,
emphasizing the feasibility of the
calculations.
Its main tool lies in a maximal
technical
simplification of the operators.
As
a characteristic
implementation of such an idea
let us recall the
Bender's and Boettcher's
requirement \cite{BB}
of having the
Hamiltonians non-Hermitian but parity-time symmetric
(${\cal PT}-$symmetric).
Indeed, in effect,
the latter restriction
made the ${\cal PT}-$symmetric theory
popular even far beyond
its initial closed-system
applications \cite{Cham,Christodoulides,Carlbook}.

Incidentally, the widespread belief in
the consistency of the ${\cal PT}-$symmetric theory
has recently been opposed  \cite{Trefethen,Siegl},
reconfirming
the older
criticism of the
theory
by mathematicians
\cite{Dieudonne}.
Fortunately, a specific
strategy circumventing the criticism
can be found described in the
physics-oriented review paper \cite{Geyer}.
The recommendation lies in
the exclusive use of the operators of observables
which are bounded.
Such a recommendation will be
followed in our present paper.

From the point of view of many phenomenologically oriented
users, the innovative potential of the amended theory may
be best illustrated by the observation that the ``conventional''
choice (\ref{kateq}) of a Hamiltonian can be
complemented by a qualitatively different,
manifestly non-Hermitian toy model
 \be
 H^{(2)}(\lambda)
 = \left [\begin {array}{cc} -1&\lambda
 \\{}-\lambda &1
 \end {array}\right ]\,.
 \label{mama}
 \ee
Still, the easily obtained formula
$E_\pm(\lambda) = \pm \sqrt{1-\lambda^2}$
which defines eigenvalues implies that
they stay real and non-degenerate
whenever $\lambda \in {\cal D}=(-1,1)$.
In this interval of parameters,
matrix (\ref{mama}) becomes Hermitizable.
It can, therefore,
play the role of a closed-system
quantum Hamiltonian.
We can infer that one of
the main qualitative innovations is that
in model (\ref{mama})  the
distance $\varrho$ of the EPs from the
real axis of parameters $\lambda$ is zero (in the
conventional model of Eq.~(\ref{kateq})
we had $\varrho=1$). This means that
one can expect the emergence of multiple innovative dynamical
features of the system.
In particular, the observable system
can get
{\em arbitrarily close\,} to
its ``unphysical'' EP-mediated phase-transition extreme.

In the light of the
off-diagonal antisymmetry of model (\ref{mama}),
the matrix is, certainly,
{\em maximally\,}
non-Hermitian \cite{maximal}.
In this sense it can be treated as an inspiration of
an AHO
generalization
 \be
 H^{(AHO)}_{\rm (antisymmetric)}(\lambda)
 =\left[ \begin {array}{cccc}
  1&\lambda\,V_{0,1}(\lambda)
  &\lambda\,V_{0,2}(\lambda)&\ldots
  \\{}-\lambda\,V_{0,1}(\lambda)&3
  &\lambda\,V_{1,2}(\lambda)
 &\ddots
 \\{}-\lambda\,V_{0,2}(\lambda)&-\lambda\,V_{1,2}(\lambda)
 &5&\ddots
 \\{}\vdots&\ddots&\ddots&\ddots
 \end {array} \right]\,.
 \label{penasy}
 \ee
Naturally, the practical applicability of such an over-ambitious model
seems very much suppressed by the purely numerical nature of
the related predictions.

\subsection{Tridiagonality constraint and the
assumption of {\cal PT-}symmetry\label{ptsect}}

We saw that even some small and manifestly
non-Hermitian matrices could play
the role of a standard,
non-numerically tractable toy-model Hamiltonian.
In \cite{maximal,tridiagonal} such an observation was generalized and
extended to the whole family
 \be
 H^{(N)}_{\rm (tridiagonal)}(\lambda)
 =\left[ \begin {array}{ccccc}
  1&b_1(\lambda)
  &0&\ldots&0
  \\{}-b_1(\lambda)&3
  &b_2(\lambda)
 &\ddots&\vdots
 \\{}0&-b_2(\lambda)
 &5&\ddots&0
 \\{}\vdots&\ddots&\ddots&\ddots&b_{N-1}(\lambda)
 \\{}0&\ldots&0&-b_{N-1}(\lambda)&2N-1
 \end {array} \right]\,
 \label{pentrid}
 \ee
of the
{\em tridiagonal\,} real $N$ by $N$
matrix candidates for a
non-Hermitian but hermitizable
perturbed-harmonic-oscillator-type Hamiltonian.

In the first step of the construction an
inessential shift $ E \to E-N$ of the
origin of the energy scale was used to
transform the main diagonal written in the boxed-symbol form
$\fbox{1,3,5,\ldots,2N-1}$ into its centrally symmetric
version
$\fbox{1-N,3-N,\ldots,N-3,N-1}$. Such a shift renders
the matrix traceless.
In the second step the
(still excessively large) number of the variable
matrix elements was halved by a decisive
simplifying assumption of the symmetry of the
matrix
of perturbations
with respect to the second diagonal.
The latter simplification
was based on
the popularity of
${\cal PT}-$symmetry \cite{Cham}
in
a specific version
imposed upon the anharmonicity
(see \cite{tridiagonal} for a more extensive
complementary discussion).
Its implementation yielded the
final form
of tridiagonal AHO Hamiltonians
 \be
 H^{(N)}_{\rm (toy)}(\lambda)
 =\left[ \begin {array}{ccccc}
1-N&b_1(\lambda)
  &0&\ldots&0
  \\{}-b_1(\lambda)&3-N
  &\ddots
 &\ddots&\vdots
 \\{}0&-b_2(\lambda)
 &\ddots&b_2(\lambda)&0
 \\{}\vdots&\ddots&\ddots&N-3&b_{1}(\lambda)
 \\{}0&\ldots&0&-b_{1}(\lambda)&N-1
 \end {array} \right]\,.
 \label{pentoy}
 \ee
The main mathematical merit of
such a choice of the
benchmark lies
in a manifestly non-numerical form of the
Hermitization (see the details in \cite{metricsaho}).
In our present paper, an even more
important phenomenological merit of the choice
is to be seen in the availability of
the extreme, singular $\lambda=\lambda^{(EPN)}$ limits
of the Hamiltonians in closed forms.
Indeed, the sequence of the EP limits
obtained, in \cite{maximal}, at all $N$
is formed by elementary matrices
 \be
 H^{(2)}_{\rm (toy)}(\lambda^{(EP)})
 = \left [\begin {array}{cc} -1&1\\{}-1&1
 \end {array}\right
 ]\,,
 \ \ \ \ \ \
 H^{(3)}_{\rm (toy)}(\lambda^{(EP)})
  = \left [\begin {array}{ccc} -2&\sqrt{2}&0\\{}-\sqrt{2}&0
 &\sqrt{2}\\{}0&-\sqrt{2}&2\end {array}\right ],
 \label{epthisset}
 \ee
 \ben
 H^{(4)}_{\rm (toy)}(\lambda^{(EP)}) = \left [\begin {array}{cccc}
  -3&\sqrt{3}   &0  &0\\
 -\sqrt{3}&-1   &2  &0\\
  0&-2  &1 &\sqrt{3}\\
  0&0&-\sqrt{3}&3
 \end {array}\right ]\,,
 \ \ \ \ \ \
 H^{(5)}_{\rm (toy)}(\lambda^{(EP)})
 =\left [\begin {array}{ccccc} -4&2&0&0&0\\{}-2&
 -2&\sqrt{6}&0&0\\{}0&-\sqrt{6}&0&\sqrt{6}&0
 \\{}0&0&-\sqrt{6}&2&2\\{}0&0&0&-2& 4\end {array}\right ]\,
 ,
 \een
etc.

\section{Pentadiagonal Hamiltonians \label{specice}}

In a small vicinity of any one of the latter extremes
the character of dynamics is interesting, determined by the closeness of
the EPN-related ``catastrophe'' \cite{catast,catastb}.
Unfortunately, a more detailed analysis of the
quantum-catastrophic
scenario
leads to the major
disappointment:  in Eq.~(\ref{kwinde}),
all of the EPN limits (\ref{epthisset})  lead, at any $N$, to
{\em the
same},
non-clustering, obstinate
$K=1$ degeneracies.
For the {\em tridiagonal\,} $\lambda-$dependent
matrices (\ref{pentoy}),
in this manner,
the menu of the
eligible scenarios of the EPN-related
quantum phased transition is
too narrow.
The mathematically user-friendly
tridiagonality constraint proves phenomenologically
over-restrictive.

\subsection{Search for anomalous degeneracies}


In contrast to the ``standard''
exceptional points
with a minimal
geometric multiplicity $K=1$,
their $K>1$ alternatives may be called
``anomalous''.
The latter term was
conjectured, in \cite{anomalous}, as reflecting
the highly plausible one-to-one correspondence
between the
tridiagonality of the Hamiltonian and the $K=1$
form of its EPN-boundary
loss-of-the-observability limit.
In this sense the most natural candidates for an
``anomalous'' EPN-related dynamical regime
may be Hamiltonians which are pentadiagonal.
Unfortunately, the corresponding general ansatz
 \be
 H^{(N)}_{\rm (pentadiagonal)}(\lambda)
 =\left[ \begin {array}{cccccc}
  1-N&b_1(\lambda)&c_1(\lambda)
  &0&\ldots&0
  \\{}-b_1(\lambda)&3-N
  &b_2(\lambda)
 &\ddots&\ddots&\vdots
 \\{}-c_1(\lambda)&-b_2(\lambda)
 &\ddots&\ddots&c_2(\lambda)&0
 \\{}0&\ddots&\ddots&N-5&b_{2}(\lambda)&c_1(\lambda)
 \\{}\vdots&\ddots&-c_2(\lambda)&-b_{2}(\lambda)&N-3&b_1(\lambda)
 \\{}0&\ldots&0&-c_1(\lambda)&-b_{1}(\lambda)&N-1
 \end {array} \right]
 \label{epen}
 \ee
can only be analyzed by numerical methods.
For our present purposes, a further simplification is needed.
Thus, a non-numerical tractability of matrix (\ref{epen})
can be achieved, e.g., by an additional
assumption that $b_n(\lambda)=0$ at all $n$.
In order to see this clearly, the resulting, simplified Hamiltonian
may be partitioned,
 \be
 H^{(N)}_{\rm (pent. special)}(\lambda)
 =\left[ \begin {array}{c|c|c|c|c|c}
  1-N&0&c_1(\lambda)
  &0&\ldots&0
  \\ \hline0&3-N
  &0
 &\ddots&\ddots&\vdots
 \\ \hline-c_1(\lambda)&0
 &\ddots&\ddots&c_2(\lambda)&0
 \\ \hline0&\ddots&\ddots&N-5&0&c_1(\lambda)
 \\ \hline\vdots&\ddots&-c_2(\lambda)&0&N-3&0
 \\ \hline0&\ldots&0&-c_1(\lambda)&0&N-1
 \end {array} \right]\,.
 \label{cepez}
 \ee
This reveals that the matrix is equal to a direct sum of
the two decoupled tridiagonal matrices
 \be
 H^{(N)}_{\rm (component\ one)}(\lambda)
 =\left[ \begin {array}{cccc}
  1-N&c_1(\lambda)
  &0&\ldots
 \\ -c_1(\lambda)&5-N
 &c_3(\lambda)&\ddots
 \\ 0&-c_3(\lambda)&9-N
 &\ddots
 \\ \vdots&\ddots&\ddots&\ddots
 \end {array} \right]\,
 \label{apezerb}
 \ee
and
 \be
 H^{(N)}_{\rm (component\ two)}(\lambda)
 =\left[ \begin {array}{cccc}
  3-N&c_2(\lambda)
  &0&\ldots
 \\ -c_2(\lambda)&7-N
 &c_4(\lambda)&\ddots
 \\ 0&-c_4(\lambda)&11-N
 &\ddots
 \\ \vdots&\ddots&\ddots&\ddots
 \end {array} \right]\,.
 \label{bepeze}
 \ee
In detail, every matrix in question
is identified, uniquely, by its
main diagonal. Thus, the matrix of
Eq.~(\ref{cepez}) may be assigned the
boxed symbol
$\fbox{1-N,3-N,\ldots,N-1}$, etc.
Also the direct-sum decomposition
of the latter matrix can be abbreviated as
follows,
 $$
 \fbox{1-N,3-N,\ldots,N-1}=\fbox{1-N,5-N,9-N, \ldots} \oplus
 \fbox{3-N,7-N,11-N,\ldots}\,.
 $$
The last elements of the summands are not displayed because they
vary with
the parity of $N$.
After the explicit specification of
the parity of $N$
we
arrive at the following two conclusions.

\begin{lemma}
At the even matrix dimension $N=2J$,
the decomposition
of the pentadiagonal
sparse-matrix model (\ref{cepez})
into its
tridiagonal AHO components (\ref{apezerb}) and (\ref{bepeze})
only supports the two standard $K=1$ EPJ limits (\ref{epthisset}),
with the two different
energies $\eta_\pm = \pm 1 \neq 0$
in Eq.~(\ref{siesta}).
\label{lemma1}
\end{lemma}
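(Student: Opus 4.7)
The strategy is to reduce the claim to the known $K=1$ EPN behavior of the tridiagonal toy family (\ref{pentoy}) recalled in (\ref{epthisset}), using an explicit affine conjugation of each of the two direct summands (\ref{apezerb}), (\ref{bepeze}) onto a centered standard tridiagonal block of dimension $J$.

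First I would verify the block sizes. When $N=2J$, the diagonal of (\ref{apezerb}) is the arithmetic progression $1-2J,\,5-2J,\,\ldots,\,2J-3$, which has exactly $J$ terms and arithmetic centre $-1$; similarly (\ref{bepeze}) carries the $J$-term progression $3-2J,\,7-2J,\,\ldots,\,2J-1$ with centre $+1$. Subtracting the respective centres and dividing by $2$ converts each main diagonal into the centered step-$2$ progression $1-J,\,3-J,\,\ldots,\,J-1$ of the tridiagonal toy Hamiltonian (\ref{pentoy}) at internal dimension $J$. The simultaneous rescaling of the antisymmetric off-diagonal entries ($c_1,c_3,\ldots$ in the first block, $c_2,c_4,\ldots$ in the second) is harmless because those parameters are themselves free coupling functions of $\lambda$; the constant shifts $\mp 1$ commute with everything. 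Hence each block is affinely similar to a genuine instance of (\ref{pentoy}) at dimension $J$.

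Next I would invoke the $K=1$ result recalled in (\ref{epthisset}): along a uniquely determined real trajectory of its couplings, the tridiagonal toy model of dimension $J$ admits a single EPJ singularity of geometric multiplicity $K=1$ at which all $J$ eigenvalues confluence at the centre of its diagonal, i.e.\ at $0$ in the reference frame of (\ref{pentoy}). Pulling the affine conjugation back, the confluence energy becomes $-1$ for the first block and $+1$ for the second, giving precisely the values $\eta_\pm=\pm 1$ stated in (\ref{siesta}).

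The delicate part is the word ``only''. Because the two blocks are decoupled, the full spectrum of (\ref{cepez}) is the union of the two block spectra, and no $K\geq 2$ clusterization in the sense of (\ref{kwinde}) that mixes eigenvectors across blocks can ever arise: any EPN-type confluence is forced to occur block by block. A single common EPN confluence of all $N=2J$ eigenvalues is excluded, since it would demand simultaneous EPJ confluence of both blocks \emph{and} identical confluence values, contradicting $\eta_+\neq\eta_-$. The main obstacle I foresee is the careful bookkeeping needed to show that the one-parameter scaling of $c_{2k-1}(\lambda)$, $c_{2k}(\lambda)$ indeed attains, in each block independently, the unique EPJ coupling trajectory recorded for (\ref{pentoy}) in \cite{maximal}; once this is in place, the two separate $K=1$ EPJ limits exhaust all EPN-type phenomena afforded by the decomposition.
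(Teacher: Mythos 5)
Your proposal is correct and follows essentially the same route as the paper's own proof: split (\ref{cepez}) into the two $J$-dimensional blocks with diagonal centres $-1$ and $+1$, identify each (up to the harmless affine shift/rescaling) with the tridiagonal toy model (\ref{pentoy}) so that each realizes its own $K=1$ EPJ limit from (\ref{epthisset}), and conclude that the incompatible confluence energies $\eta_\pm=\pm1$ rule out any single anomalous EPN degeneracy of all $N=2J$ levels. The paper states this more tersely (phrasing it as the components being ``centrally asymmetric''), but the substance is identical.
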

\begin{proof}
The main diagonal
$\fbox{1-N,3-N,\ldots,N-1}= \fbox{1-2J,3-2J,\ldots,2J-1}$
of matrix (\ref{cepez}) does not contain a central zero.
This means that the central interval $(-1,1)$
is ``too short''. Its two elements $-1$ and $1$
will be distributed among both of the
components (\ref{apezerb}) and (\ref{bepeze}). In the
resulting direct sum
  $$
  \fbox{1-N,3-N,\ldots,N-1}= 
  \fbox{1-2J,5-2J, \ldots,2J-3}\oplus
  \fbox{3-2J,7-2J,\ldots,2J-1}\,,
  $$
for this reason, both of the components will be centrally asymmetric.
The search for an anomalous EP with $K=2$ fails.
Even after a successful $J$ by $J$ realization
of the two separate EPJ limits using building blocks (\ref{epthisset}),
the requirement~(\ref{siesta}) will offer
two different, incompatible
values $\eta_\pm = \pm 1$ of the
eligible limiting EP energies.
The direct-sum decomposition yields the two
non-anomalous $K=1$ EPs
of the same small order  $J=N/2$.
\end{proof}

\begin{lemma}
At odd $N=2J+1$, both of the
tridiagonal matrices (\ref{apezerb}) and (\ref{bepeze})
admit the respective realizations (\ref{epthisset})
of their EP limits.
The related energies coincide so that
the direct sum (\ref{cepez})
admits the anomalous
EPN limit with
geometric multiplicity two.
\label{lemma2}
\end{lemma}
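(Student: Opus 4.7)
The plan is to push the direct-sum decomposition (\ref{cepez}) through the setting of odd $N = 2J+1$ and to verify that both tridiagonal summands admit the standard EP limits of Section \ref{specibe} with a common confluence energy. First I would compute the main diagonals of (\ref{apezerb}) and (\ref{bepeze}) explicitly: starting from the equidistant entries $-2J, -2J+2, \ldots, 2J-2, 2J$ and picking out the odd-indexed and even-indexed positions yields $\fbox{-2J,-2J+4,\ldots,2J}$ of length $J+1$ and $\fbox{-2J+2,-2J+6,\ldots,2J-2}$ of length $J$, respectively. The essential observation, specific to odd $N$, is that the reflection $k\mapsto N+1-k$ now preserves the parity of positions, so both sub-sequences are \emph{individually} symmetric about $0$; this is precisely what fails in the even case treated by Lemma \ref{lemma1}.

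After an innocuous rescaling by $1/2$, the two diagonals take exactly the centrally symmetric equidistant form $\fbox{1-M,3-M,\ldots,M-1}$ of the toy Hamiltonian (\ref{pentoy}), with $M = J+1$ for the first component and $M = J$ for the second. At this point I would invoke the closed-form EP constructions already quoted in (\ref{epthisset}) and developed in \cite{maximal}: each traceless, centrally symmetric tridiagonal AHO block admits a Kato exceptional point of maximal order at which all of its eigenvalues coalesce to the centroid of the diagonal, namely to $\eta = 0$. This gives an EP of order $J+1$ for the first component and of order $J$ for the second, both with coalescence energy zero.

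Next I would synchronize the two limits. Because the odd-indexed couplings $c_1,c_3,\ldots$ live entirely inside the first component and the even-indexed couplings $c_2,c_4,\ldots$ entirely inside the second, the two sequences can be rescaled independently so that both blocks reach their EPs at one and the same critical $\lambda = \lambda^{(EPN)}$. At that value all $2J+1 = N$ eigenvalues collapse to $\eta = 0$, satisfying the EPN condition (\ref{siesta}). Since each summand is itself of the non-anomalous $K=1$ type (only one eigenvector survives per block in its own limit), the assembled direct sum contributes exactly one eigenvector per block at the shared energy, giving geometric multiplicity $K = 2$, as claimed.

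The main obstacle I expect to handle carefully is this synchronization step: the critical coupling for the standard EP of the $(J+1)$-dimensional tridiagonal block generally differs from that of the $J$-dimensional one, and one must verify that the two independent scaling freedoms built into the pentadiagonal ansatz (\ref{epen}) are rich enough to equate those two critical values while preserving the prescribed antisymmetry pattern. Once the matching is arranged, the crucial coincidence $\eta_{\rm one} = \eta_{\rm two} = 0$, which rests solely on the central symmetry of both sub-diagonals at odd $N$, yields the anomalous $K=2$ EPN with no further computation.
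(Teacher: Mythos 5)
Your proposal is correct and follows essentially the same route as the paper: identify the two sub-diagonals at odd $N=2J+1$, observe that both are individually centered at zero (the feature that fails for even $N$), invoke the closed-form EP limits of the centrally symmetric tridiagonal blocks from (\ref{epthisset}), and conclude that the shared confluence energy $\eta=0$ yields a $K=2$ degeneracy. Your explicit treatment of the synchronization of the two critical couplings (possible because the $c_{2n-1}$ and $c_{2n}$ enter the two blocks independently) and of the factor-of-two rescaling of the sub-diagonals is a useful elaboration of a step the paper leaves implicit.
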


\begin{proof}
We have
  \be
  \fbox{1-N,3-N,\ldots,N-1}= \fbox{-2J,2-2J,\ldots,2J}=
  \fbox{-2J,4-2J, \ldots,2J}\oplus
  \fbox{2-2J,6-2J,\ldots,2J-2}
  \label{ledva}
  \ee
so that out of
the central triplet of integers $(-2,0,2)$,
the doublet  $(-2,2)$ remains long enough to be
a component of one of the sub-boxes.
For this reason, their respective dimensions $J+1$ and $J$
are now different.
This is compensated by the central symmetry
of the summands which
implies
the coincidence of the
separate EP-related energies,
$\eta_\pm = \eta=0$.
In
the direct sum (\ref{cepez})
of these matrices
the respective EP(J+1) and EPJ limits
degenerate to a single, shared, manifestly anomalous
EPN limit. The ``$K-$tuple clusterization'' phenomenon
(\ref{kwinde}) takes place at $K=2$.
\end{proof}

Our two Lemmas may be read as an
empirical support of
the ``no-go'' conjecture of
Ref.~\cite{anomalous} which
attributed the long-lasting failures of
the trial-and-error
search for the anomalous, $K>1$ EPN degeneracies
to the underlying matrix-tridiagonality restriction.
Our Lemmas confirm
that the mere weaker, pentadiagonality constraint
or its further band-matrix generalizations
need not necessarily help too much.
At the same time, the model
has several methodical merits.
First, it shows that any attempted
classification
of the Hamiltonians considered
in their EPN extreme
must involve the anomalous cases with $K>1$.
Second, it offers an independent support
for an apparently arbitrary
but unexpectedly fortunate
choice of the specific, AHO-based
Hamiltonian-operator (sub)matrices.
Third, the pentadiagonality
facilitates the
decoupling
of the system into
two subsystems, an idea which inspires, immediately,
a generalization to the
$K-$component partitionings with $K>2$.
Fourth,
one should emphasize that
the study of the pentadiagonal models
may be perceived as paving the way towards
its full-matrix extensions,
especially at
$\lambda \approx  \lambda^{(EP)}$
where
there emerges a number of technical problems
ranging from the
numerical ill-conditioning difficulties \cite{bhgt6}
up to the complicated nature of the
perturbation-approximation tractability of the
stable and unitary
systems when occurring near the EPN
singularities \cite{without}.

Certain unusual physical
phenomena might emerge involving not only the
collapse of the system (via
a ``fall into the EPN singularity'')
but also, in opposite direction, the processes of an
escape from the
degeneracy resembling the Big Bang in cosmology \cite{BiBa}.
In both of these directions
there are
illustrative examples available in the literature.
{\it Pars pro toto\,} let us mention Ref.~\cite{Uwe}
where the authors showed that in a
complex Bose-Hubbard model the
many-bosonic system
can either
escape from, or fall in, the Bose-Einstein condensation singularity.

\subsection{Corridors of unitarity\label{4c3}}

In an overall
model-building strategy
one should feel aware of the deep difference between the
effective-operator
studies of resonances in open systems
(working with complex energies and, hence,
not considered here)
and the closed system models
in which the
spectrum of energies is assumed real.
In the latter cases
a
corridor of a possible unitary-evolution
could exist and
connect
the weakly-anharmonic (WA)  and
the strong-coupling (SC) dynamical regimes.


The partitioning of our present specific
quantum Hamiltonians  (\ref{cepez}) may be expected to simplify
the construction of the corridors.
First of all, its perturbation specification
in the WA dynamical regime (in which
all of the
off-diagonal elements $c_n(\lambda)$ have to remain small)
would be routine. In the opposite extreme
of the strongly non-Hermitian SC domain, both of
the tridiagonal-matrix components (\ref{apezerb})
and (\ref{bepeze}) of the Hamiltonian
become {\em separately\,} tractable as small perturbations
of their  respective
SC EP limits
(\ref{epthisset}).
Thus,
as long as the matrix dimensions $N$ remain finite,
an approximate construction of the bound states becomes feasible
in both of the WA or SC perturbation regimes.
The available
perturbation
approximations will
determine, roughly at least,
the families of
the
admissible matrix elements $c_n(\lambda)$
for which the spectrum remains real and discrete,
i.e., for which the evolution of the quantum
system in question remains unitary and, hence, stable.

In the gap between the two comparatively small
WA- and SC-applicability
subintervals of $\lambda \leq \lambda^{(WA)}_{\max}
\ll \lambda^{(EP)}$
and  $\lambda \geq \lambda^{(SC)}_{\min}
\lessapprox \lambda^{(EP)}$
it will be more difficult to
guarantee the reality of the spectrum.
Fortunately, what we still know is that
for
$\lambda \in (\lambda^{(WA)}_{\max}, \lambda^{(SC)}_{\min})$
there always
exists a special unitarity-compatible
parametrization of matrix elements
applicable to both of the
{\em independent\,}
tridiagonal-matrix sub-Hamiltonians
(\ref{apezerb})
and (\ref{bepeze}) and
constructed in \cite{maximal,tridiagonal}.
This implies that in the light of Lemma \ref{lemma2}
a unitarity corridor
will exist,
in the anomalous $K=2$ case
based on
the direct sum
of these components, for all of
our
pentadiagonal ${\cal PT}-$symmeric
anharmonic-oscillator Hamiltonians (\ref{cepez})
with odd $N$.
In particular, in the $N=7$ exemplification
of our
antisymmetrically anharmonic pentadiagonal Hamiltonian (\ref{cepez})
 \be
 {H^{(7)}}(\lambda)=  \left[ \begin {array}{ccccccc}
  1&0&\sqrt {3}g&0&0&0&0
 \\\noalign{\medskip}0&3&0&\sqrt {2}g&0&0&0
 \\\noalign{\medskip}-\sqrt {
3}g&0&5&0&{2}g&0&0
\\\noalign{\medskip}0&-\sqrt {2}g&0&7&0&\sqrt {2}g&0
 \\\noalign{\medskip}0&0&-{2}g&0&9&0&\sqrt {3}g
 \\\noalign{\medskip}0&0&0&-
 \sqrt {2}g&0&11&0\\\noalign{\medskip}0&0&0&0&-\sqrt {3}g&0&13
 \end {array} \right]\,
 \label{tomograf}
 \ee
(where we returned, for the sake of clarity of physics, to the
initial, unshifted harmonic-oscillator energy scale),
the
$\lambda-$dependence
of the
off-diagonal matrix elements is
controlled by a single
function $g = g(\lambda)$.
Such a simplification implies that
the related Schr\"{o}dinger bound-state problem
 \be
 {H^{[7]}}(g)\,|
 {\psi_n}(g)\kt=E_n(g)\,| {\psi_n}(g)\kt\,
 \label{themo}
 \ee
is solvable exactly,
 $$
 E_0(g)=7\,,\ \ \ \ \
 E_{\pm 1}(g)
 =7\pm \sqrt {4-{g}^{2}}
$$
 $$
 E_{\pm 2}(g)
 =7\pm 2\,\sqrt {4-{g}^{2}}
\,,\ \ \ \
 E_{\pm 3}(g)
 =7\pm 3\,\sqrt {4-{g}^{2}}\,.
$$
The model exemplifies the system
in which there exists a corridor of unitarity
which connects the self-adjoint harmonic-oscillator dynamics
realized
at $g=0$
with the EP7 extreme where $g=2$.
Thus,
whenever the growth of $g(\lambda)$ does not deviate
too much from the linear function,
the bound-state energies remain real and
well separated
along a path connecting the WA and SC
ends of
the
open interval of values
$g=g(\lambda) \in (0,2)$.

\subsection{SC dynamical extreme}

In the
WA regime with
$\lambda  \leq \lambda^{(WA)}_{\max}$
the optional auxiliary (and, say, monotonously increasing)
function $g(\lambda)$ is to be kept
small. Then, the
anharmonicity will remain
easily tractable by the standard
Rayleigh-Schr\"{o}dinger perturbation methods.
Not only at our $N=7$ but also
in all of the models of a pentadiagonal
direct-sum type
with arbitrary odd~$N$.
Near the opposite SC boundary where $g \lessapprox 2$
the EP7 degeneracy (\ref{siesta})
is reached. We may introduce a new small
parameter $\kappa=\kappa(\lambda)\in (0,1)$,
redefine $g=\widetilde{g}(\kappa)=2\,(1-\kappa^2)$ and consider the
related (``tilded'') modification of our
spectral problem (\ref{themo}) with,
naturally, the same exact eigenvalues
written now
in a reparameterized, SC-friendly form
 $$
 \widetilde{E}_0(\kappa)=7\,,\ \ \ \ \
 \widetilde{E}_{\pm 1}(\kappa)
 =7 \pm 2\,\sqrt {-{\kappa}^{4}+2\,{\kappa}^{2}}
 \sim 7 \pm 2\,\sqrt {2}{\kappa}+ {\cal O}(\kappa^3)\,,
 $$
  $$
  \widetilde{E}_{\pm 2}(\kappa)=7 \pm 4\,\sqrt {-{\kappa}^{4}
  +2\,{\kappa}^{2}}
 \,,\ \ \ \ \
  \widetilde{E}_{\pm 3}(\kappa)
  =7 \pm 6\,\sqrt {-{\kappa}^{4}+2\,{\kappa}^{2}}\,.
 $$
With $\kappa \in (0, 1)$ these values remain all real.
%
%
%
%

In the SC EP7 limit $\kappa \to 0$
the spectrum becomes degenerate and the
Hamiltonian ceases to be diagonalizable.
Even after the determination of
the EP value $\eta=7$ of the degenerate energy,
a modified eigenvalue problem
can be considered in this limit,
 \be
 H^{(7)}(\lambda^{(EP7)})\,Q^{[7]}=Q^{[7]}\,J^{[7]}(\eta)\,.
 \label{kano}
 \ee
In the light of Lemma \ref{lemma2}
we may immediately pick up the so
called canonical representation
 \be
 J^{[7]}(\eta)
 =\left[ \begin {array}{cccc|ccc}
  \eta&1&0&0&0&0&0
  \\{}0&\eta&1
 &0&0&0&0
 \\{}0&0&\eta&1&0&0&0
 \\{}0&0&0&\eta&0
 &0&0
 \\
 \hline
 0&0&0&0&\eta&1&0
 \\{}0&0&0&0&0&\eta&1
 \\{}0&0&0&0&0&0&\eta
 \end {array} \right]\,
 \label{euu}
 \ee
of our limiting toy-model Hamiltonian in (\ref{kano}).
Although such a choice is not unique
(see \cite{passage}
for some related mathematical comments and technicalities),
its $K=2$ direct-sum form (\ref{euu})
composed
of
the two Jordan-block matrices
is by far its
most popular version.

The partitioning in Eq.~(\ref{euu})
re-emphasizes that there is no mutual coupling between
the even and odd indices in our toy-model Hamiltonian
(\ref{tomograf}). Also in the EP7 limit,
matrix $H^{(7)}(\lambda^{(EP7)})$ may be interpreted
as a direct sum of the two smaller components,
$$
  H^{(7)}(\lambda^{(EP7)})=H^{[odd]}\bigoplus H^{[even]}
  $$
where
 $$
  H^{[odd]}=
 \left[ \begin {array}{cccc}
 1&2\,\sqrt {3}&0&0
  \\-2\, \sqrt {3}&5&4&0
 \\0&-4&9&2\,\sqrt {3}
 \\0&0
 &-2\,\sqrt {3}&13
 \end {array} \right]\,,
 \ \ \ \ \
  H^{[even]}=
 \left[ \begin {array}{ccc}
 3&2\,\sqrt {2}&0
  \\-2\, \sqrt {2}&7&2\,\sqrt {2}
 \\0
 &-2\,\sqrt {2}&11
 \end {array} \right]
 \,.
 $$
Another
specific illustrative merit
of our $N=7$ model
lies
the
availability of the explicit  transition-matrix
solution
 $$
 Q^{[7]}= \left[ \begin {array}{ccccccc} -48&24&-6&1&0&0&0\\{}0
&8&-4&1&8&-4&1\\{}-48\,\sqrt {3}&16\,\sqrt {3}&-2\,
\sqrt {3}&0&0&0&0\\{}0&8\,\sqrt {2}&-2\,\sqrt {2}&0&8
\,\sqrt {2}&-2\,\sqrt {2}&0\\{}-48\,\sqrt {3}&8\,
\sqrt {3}&0&0&0&0&0\\{}0&8&0&0&8&0&0
\\{}-48&0&0&0&0&0&0\end {array} \right]
 $$
of Eq.~(\ref{themo}).
Indeed, in the constructions of the SC
perturbation series
the transition matrices
can be treated as
an optimal
substitute for the
non-existent complete
sets of unperturbed eigenvectors.
In this sense
the role of
an unperturbed SC basis
is played by the transition matrices themselves
(see, e.g., \cite{admissible}).


\section{Maximally anharmonic full-matrix models \label{specide}}

The simulation of quantum dynamics near EPNs
using multidiagonal matrix Hamiltonians
was proposed in paper \cite{anomalous}.
The scope of the study was restricted there,
due to the apparently purely numerical nature
of the problem, to
the smallest matrix dimensions $N \leq 6$.
The decision was a bit unfortunate
because
in the light of subsection \ref{4c3}
and
Lemmas \ref{lemma1} and \ref{lemma2}
above, the next option with $N=7$
would have been perceivably more instructive.
Still, the message
delivered
by Ref.~\cite{anomalous} remained significant, showing that
the search for anomalous
EPN singularities
with optional
geometric multiplicities $K$
should be based on a systematic study of
non-tridiagonal, multi-diagonal matrix models.

On such a background
we are prepared to
make the next step towards the construction of a complete
list of the first few
AHO-based phase-transition scenarios.
Our considerations will start from
the entirely general real $N$ by $N$ matrix ansatz
 \be
 H^{(N)}_{\rm (full)}(\lambda)
 =\left[ \begin {array}{cccccc}
  1-N&b_1(\lambda)&c_1(\lambda)
  &d_1(\lambda)&\ldots&\omega_1(\lambda)
  \\{}-b_1(\lambda)&3-N
  &b_2(\lambda)
 &c_2(\lambda)&\ddots&\vdots
 \\{}-c_1(\lambda)&\ddots
 &\ddots&\ddots&\ddots&d_1(\lambda)
 \\{}-d_1(\lambda)&\ddots&-b_3(\lambda)
 &N-5&b_{2}(\lambda)&c_1(\lambda)
 \\{}\vdots&\ddots&-c_2(\lambda)&-b_{2}(\lambda)&N-3&b_1(\lambda)
 \\{}-\omega_1(\lambda)&\ldots&-d_1(\lambda)
 &-c_1(\lambda)&-b_{1}(\lambda)&N-1
 \end {array} \right]\,
 \label{pefull}
 \ee
carrying, in the diagonally dominated cases at least, the
interpretation of an acceptable physical Hamiltonian.
In the language of reviews~\cite{Carl,ali,Geyer,MZbook}
such a Hamiltonian
generates, in
the corresponding quasi-Hermitian Schr\"{o}dinger picture,
the standard unitary evolution of a closed and stable
quantum system
with the states living in an $N-$dimensional physical
Hilbert space ${\cal H}^{(N)}$.

\subsection{Classification of EPNs using generalized boxed symbols}

What lies in the very center of our attention
are the EPN-degeneracy requirements (\ref{siesta})
and (\ref{kwinde})
in application to the
general ${\cal PT}-$symmeric
anharmonic-oscillator Hamiltonian (\ref{pefull}).
We will assume that the real and antisymmetric-matrix
anharmonicity
(i.e., in a way, a maximally non-Hermitian anharmonicity)
vanishes in the
unperturbed-harmonic-oscillator limit of $\lambda \to \lambda^{(HO)}=0$.

After the model leaves the weakly anharmonic dynamical
regime controlled by the diagonally dominated Hamiltonian,
one has to start to fine-tune the $\lambda-$dependence of all of the
off-diagonal matrix elements in order to guarantee the
survival of the reality and non-degeneracy
of the bound-state energy spectrum,
i.e., of the
stability and unitarity of the evolution
in a ``physical'' interval of $\lambda \in (0,\lambda^{(\max)})$.
This opens the theoretically as well as
phenomenologically highly attractive
possibility of the existence of such a set of the off-diagonal
matrix elements (i.e., of the {\it ad hoc\,}
$\lambda-$dependent anharmonicities) that the ultimate
loss of the observability of the system
(i.e., the loss of the reality and non-degeneracy of
the energy spectrum at $\lambda^{(\max)}$)
would have a very specific
complete-degeneracy
EPN form as prescribed by
Eq.~(\ref{siesta}) above.
This would mean that the system can reach
such a strong-perturbation regime
with
$\lambda \approx \lambda^{(\max)}=\lambda^{EP)}$ where
the system reaches its EPN degeneracy as characterized by
Eq.~(\ref{esist}).

Our project of the study of such a possibility
was inspired by the recent paper \cite{anomalous}
in which it has been emphasized
(and, via a few
{\it ad hoc\,} $N=6$ examples, demonstrated) that
the
complete EPN degeneracy of the spectrum
may be ``anomalous'', accompanied by the mere $K-$clustered,
geometric multiplicity reflecting
degeneracy
of the eigenstates
as described by Eq.~(\ref{kwinde}).
We are going to complement Ref.~\cite{anomalous}
by extending its purely numerical $K=2$ and $K=3$ samples
of the EP6-related
Hamiltonians to
a universal non-numerical construction of the EPN-related models
characterized by the arbitrary integer matrix dimensions $N\geq 2$
and by the arbitrary integer geometric multiplicites $K \geq 1$.

The idea of our constructions is threefold.
Firstly, in a way inspired by the specific $K=2$
pentadiagonal-matrix constructive results of Lemma \ref{lemma2} above,
the full-matrix
Hamiltonian (\ref{pefull}) will again be characterized,
without any danger of confusion,
by the slightly generalized version of its identification
using the
description of its main diagonal
via the same boxed
left-right symmetric symbol
 \be
 {\cal S}(N,1)=\fbox{1-N,3-N,\ldots,N-3,N-1}\,
 \label{bosy}
 \ee
(see Definition \ref{bece} in Appendix A below). Secondly,
in the same spirit,
we will search for a multiterm (i.e., more precisely, $K-$term)
generalization of the
direct-sum $K=2$ expansion
(\ref{ledva}).
Thirdly,
we will emphasize that or present physics-related,
AHO-based
requirement of the
non-numerical tractability of our toy-model EP limits
is in a one-to-one correspondence with the
mathematics-related requirement
that
all of the components of
our innovative and systematic $K-$term direct-sum expansions
must remain centrally symmetric.

\subsection{Enhancement of clusterization $K \longrightarrow K+1$}

Some of the combinatorial
aspects of the direct-sum expansions
are clarified in Appendix A. In a more pragmatic spirit
let us return to the concept of the partitioning
which proved
to be useful in the pentadiagonal-matrix case above.
On a way toward its $K>2$ generalizations
let us first concentrate, for introduction, on a special form of
the most elementary partitioning-motivated
separation of the outer rows and columns followed by
their reduction to the mere two non-vanishing elements. This yields
the matrix
 \be
 H^{(N)}_{\rm (spec.partit.)}(\lambda)
 =\left[ \begin {array}{c|cccc|c}
  1-N&0&0
  &\ldots&0&\omega_1(\lambda)
  \\ \hline 0&3-N
  &b_2(\lambda)
 &\ldots&z_2(\lambda)&0
 \\ 0&-b_2(\lambda)
 &\ddots&\ddots&\vdots&0
 \\ \vdots&\vdots&\ddots&N-5&b_{2}(\lambda)&\vdots
 \\ {} 0&-z_2(\lambda)&\ldots&-b_{2}(\lambda)&N-3&0
 \\ \hline -\omega_1(\lambda)&0&0&\ldots&0&N-1
 \end {array} \right]
 \label{pepart}
 \ee
which can be perceived as a direct sum of the two
decoupled and fully independent
smaller matrices,
 $$
 H^{(N)}_{\rm (spec.partit.)}(\lambda)
 =
 \left [(N-1)\times H^{(2)}_{\rm (toy)}(\lambda)\right ]
 \oplus H^{(N-2)}_{\rm (full)}(\lambda)\,.
 $$
Such a decomposition
preserves the possible AHO-related nature of both of the components.
In our present abbreviated notation of Appendix A
this means that
the centrally symmetric main diagonal
of the left-hand-side Hamiltonian
(assigned the
appropriate boxed symbol)
can be reinterpreted as the following composition of
the two other AHO-representing symbols,
 $$
 \fbox{1-N,3-N,\ldots,N-1}=
 \fbox{1-N,N-1}\,  \oplus  \fbox{3-N,5-N,\ldots,N-3}\,.
 $$
Immediately, the trick leading to the
latter direct-sum decomposition can be generalized to
any centrally symmetric partitioning.
At a few illustative
matrix dimensions $N$
the complete lists of these
decompositions may be found listed,
using a slightly simplified notation,
in Appendix~A.

\section{Maximally anharmonic  sparse-matrix models\label{specife}}

Partitioning of the AHO-representing centrally symmetric main diagonal
${\cal S}(N,1)$ of Eq.~(\ref{bosy}) into $K-$plets
of its shorter,
centrally symmetric
equidistant subsets
 \be
 {\cal S}(M,L)=\fbox{(1-M)\,L,(3-M)\,L,\ldots,(M-3)\,L,(M-1)\,L}\,
 \label{hjugobossy}
 \ee
forms our main model-building principle.
In its spirit, every AHO-based Hamiltonian
is to be represented
as a direct sum of
its AHO-based sub-Hamiltonian building blocks.
At the first few dimensions $N$,
the systematic
application
of this recipe is to be illustrated in
what follows.

\subsection{The choice of $N=2$ and $N=3$: no anomalous degeneracies}

For our present AHO class of
$\lambda-$dependent Hamiltonians (\ref{pefull})
there exists strictly one,
unique EP2 limit
satisfying our restrictions at $N=2$, namely, the
matrix $H^{(2)}_{\rm (toy)}(\lambda^{(EP)})$
as displayed in Eq.~(\ref{epthisset}).
In the abbreviated notation using the
centrally symmetric
boxed symbols such a matrix is characterized by
${\cal S}(2,1)=\fbox{-1,1}$. The number $a(N)$ of eligible
scenarios is one, $a(2)=1$.

Similarly, at $N=3$ we have
$a(3)=1$ and
the unique
limit $H^{(3)}_{\rm (toy)}(\lambda^{(EP3)})$
represented by the boxed $K=1$
symbol ${\cal S}(3,1)=\fbox{-2,0,2}$ and
by the matrix
displayed in Eq.~(\ref{epthisset}).

\subsection{The simplest anomalous case with $N=4$ and $K=2$}

Besides the trivial $K=1$ option
with symbol $\fbox{-3,-1,1,3}$, there exists strictly one other
centrally symmetric possibility of decomposition
  $$
 \fbox{-3,-1,1,3}=\fbox{-1,1} \oplus \fbox{-3,3}
\,,\ \ \ \ K=2 \,.
 $$
In the limit $\lambda \to \lambda^{(EP4)}$
this direct sum represents the seven-diagonal but very
sparse AHO matrix
\be
 H^{(4)}_{(K=2)}(\lambda^{(EP4)}) = \left [\begin {array}{rrrr}
  -3&0   &0  &3\\
 0&1   &-1  &0\\
  0&-1  &1 &0\\
 -3&0&0&3
 \end {array}\right ]\,.
 \ee
In the unitarity domain where $\lambda \neq \lambda^{(EP4)}$
the number of the eligible dynamical scenarios is two, $a(4)=2$.

\subsection{Two $K=2$ options at $N=5$}

At $N=5$ the number of scenarios is three, $a(5)=3$.
Indeed, besides the trivial case, we have
the two $K=2$ decompositions
  $
 \fbox{-4,-2,0,2,4}=\fbox{-2,0,2} \oplus \fbox{-4,4}
 $ and
 $
 \fbox{-4,-2,0,2,4}=\fbox{-4,0,4} \oplus \fbox{-2,2} $,
representing the two respective EP5 limiting matrices, viz., the
nine-diagonal
 $$
  H^{(5)}_{(K=2,a)}(\lambda^{(EP5)})
 =\left [\begin {array}{rrrrr}
  -4&0&0&0&4
 \\{}0& -2&\sqrt{2}&0&0
 \\{}0&-\sqrt{2}&0&\sqrt{2}&0
 \\{}0&0&-\sqrt{2}&2&0
 \\{}-4&0&0&0& 4
 \end {array}\right ]\,
 $$
and the pentadiagonal
 $$
  H^{(5)}_{(K=2,b)}(\lambda^{(EP5)})
 =\left [\begin {array}{crccc}
    -4&0&2\sqrt{2}&0&0
 \\{}0& -2&0&2&0
 \\{}-2\sqrt{2}&0&0&0&2\sqrt{2}
 \\{}0&-2&0&2&0
 \\{}0&0&-2\sqrt{2}&0& 4
 \end {array}\right ]\,.
 $$
The latter matrix
fits in the classification pattern as provided by
Lemma~\ref{lemma2} above.

\subsection{The first occurrence of $K=3$ at $N=6$}

Besides the trivial, non-degenerate EP6 limit
with $K=1$ we have to consider its anomalous descendants, viz., the
unique $K=2$ decomposition
  $
 \fbox{-5,-3,-1,1,3,5}=\fbox{-3,-1,1,3} \oplus \fbox{-5,5}\,
 $
and the
unique $K=3$ decomposition
  $
 \fbox{-5,-3,-1,1,3,5}= \fbox{-1,1}
  \oplus \fbox{-3,3} \oplus \fbox{-5,5}
 $.
In both of these cases the
necessary direct-sum components of
$H^{(6)}_{(K=2,K=3)}(\lambda^{(EP6)})$
may be found displayed in Eq.~(\ref{epthisset}).
In the latter case, for example, we get
 $$
  H^{(6)}_{(K=3)}(\lambda^{(EP6)})
 =\left [\begin {array}{rrrccc}
    -5&0&0&0&0&5
 \\{}0& -3&0&0&3&0
 \\{}0&0&-1&1&0&0
 \\{}0&0&-1&1&0&0
 \\{}0&-3&0&0&3&0
 \\{}-5&0&0&0& 0&5
 \end {array}\right ]\,.
 $$
Summarizing, the number of scenarios is $a(6)=3$.
Incidentally,
the role
and consequences of small perturbations of the latter matrix have
numerically been
analyzed in \cite{anomalous}.

\subsection{Paradox of decrease of $a(N)$ between $N=7$ and $N=8$}

At $N=7$ the number of the eligible EP7 scenarios is $a(7)=6$
because
the usual trivial $K=1$ option can be accompanied by
the following quintuplet of anomalous EP7 direct sums,
  $$
 \fbox{-6,-4,-2,0,2,4,6}=
  \fbox{-4,-2,0,2,4} \oplus \fbox{-6,6}\,,
 \ \ \ \ \ K=2\,,
  $$
  $$
 \fbox{-6,-4,-2,0,2,4,6}=\fbox{-2,0,2} \oplus \fbox{-4,4}
 \oplus \fbox{-6,6}\,,
 \ \ \ \ \ K=3\,,
 $$
  $$
 \fbox{-6,-4,-2,0,2,4,6}=\fbox{-4,0,4} \oplus \fbox{-2,2}
 \oplus \fbox{-6,6}\,,
 \ \ \ \ \ K=3
 $$
  $$
 \fbox{-6,-4,-2,0,2,4,6}=
 \fbox{-4,0,4} \oplus \fbox{-6,-2,2,6}\,,
 \ \ \ \ \ K=2\,,
 $$
  $$
 \fbox{-6,-4,-2,0,2,4,6}=\fbox{-6,0,6} \oplus \fbox{-2,2}
 \oplus \fbox{-4,4}\,,
 \ \ \ \ \ K=3\,.
 $$
In contrast, at
$N=8$
we have $a(8)=4$, i.e.,
only the triplet of the anomalous, $K>1$ direct sums becomes available,
viz.,
  $$
 \fbox{-7,-5,-3,-1,1,3,5,7}=\fbox{-5,-3,-1,1,3,5}
  \oplus \fbox{-7,7}\,,
 \ \ \ \ \ K=2\,,
 $$
  $$
 \fbox{-7,-5,-3,-1,1,3,5,7}=\fbox{-3,-1,1,3} \oplus \fbox{-5,5}
 \oplus \fbox{-7,7}\,,
 \ \ \ \ \ K=3\,,
 $$
and, finally, the first four-term direct-sum decomposition
  $$
 \fbox{-7,-5,-3,-1,1,3,5,7}= \fbox{-1,1} \oplus \fbox{-3,3}
  \oplus \fbox{-5,5}\,,
  \oplus \fbox{-7,7}\,,
 \ \ \ \ \ K=4\,
    $$
representing a
fifteen-diagonal but very sparse matrix
$ H^{(8)}_{(K=4)}(\lambda^{(EP8)})$.

\section{Discussion \label{speciae}}

The concept of the exceptional-point
value $\lambda^{(EP)}$ of a real parameter
$\lambda \in \mathbb{R}$ in a linear operator
$H(\lambda)$
proved, originally, useful
just in mathematics \cite{Kato,book}.
Physics behind the EPs
remained obscure.
The situation has changed after several authors
discovered
that the concept admits
applicability in multiple branches of quantum as well as non-quantum
physics \cite{Carl,Christodoulides,Carlbook}.
{\it Pars pro toto\,} let us mention that
in the subdomain of quantum physics the values of $\lambda^{(EP)}$
acquired the status
of instants of an experimentally realizable
quantum phase transition \cite{BB,Geyer,denis}.
The
boundary-of-stability
role played by the values of $\lambda^{(EP)}$
attracted, therefore, attention of experimentalists \cite{Heiss,Joglekar}
as well as of theoreticians \cite{Trefethen,Dorey}.

\subsection{Unitary vs. non-unitary quantum systems}

Whenever one keeps the evolution
unitary, the values of $\lambda^{(EP)}$
mark the
points of the loss of the observability
of the system \cite{catast,sdenisem}.
In this sense, our present
hierarchy of specific AHO models may be
treated as certain
exactly solvable quantum analogue of the Thom's  typology of classical
catastrophes \cite{Zeeman},
with potential applicability to closed as well as open systems.

In our present paper we were exclusively
interested in the former type of applications.
We showed that
the dynamics
of any unitary quantum system
(i.e., typically, its stability with respect to small perturbations)
is particularly strongly influenced by the
EPs. We should only add that an extreme care must be paid
to
the Stone theorem \cite{Stone}
requiring the
Hermiticity of
$H(\lambda)$
in the related physical Hilbert space ${\cal H}$.
This means that
a hermitization of the Hamiltonian is needed
\cite{Geyer}.
Such a process usually involves a
reconstruction of an
appropriate amended inner product
in the conventional but
manifestly unphysical Hilbert space ${\cal K}$.
Interested readers may find a sample of
such a reconstruction of ${\cal H}$,
say, in \cite{scirep}.

In the realistic models one often
encounters a paradox that the construction of
the correct, amended inner product
may happen to be prohibitively complicated, i.e.,
from the pragmatic point of view,
inaccessible.
This is the reason why people often
postpone the problem and use,
temporarily, a simplified inner product.
For our present, user-friendly, matrix-represented
AHO Hamiltonians of closed systems with $N < \infty$
such a purely technical obstacle does not occur.
The
reconstruction  of ${\cal H}$
(left to the readers)
would be a routine
application of linear algebra.
Reclassifying the real AHO matrices which are
non-Hermitian in our auxiliary space
${\cal K}=\mathbb{R}^N$ (that's why
we write $H \neq H^\dagger$)
into operators which are, by construction, Hermitian in ${\cal H}$
(in \cite{MZbook}, for example, we wrote $H=H^\ddagger$).

Let us add that in the more common studies
of effective non-Hermitian Hamiltonians $H_{\rm eff}(\lambda)$
describing the manifestly non-unitary evolution of the
open quantum systems
the role of the physical space is transferred back
from the ``unfriendly'' ${\cal H}$
to the ``original'' ${\cal K}$.
This means that
the difficult
construction of a nontrivial
inner product
is not needed.
Thus, it is not too surprising that
in the framework of the quantum theory of open systems
the use of various non-Hermitian Hamiltonians
becomes increasingly popular
\cite{Christodoulides}.
The study and/or the search for the
manifestations of the presence and structure of
the open-system EPs
enters, due to their simpler forms,
many new theoretical as well as phenomenological
territories \cite{Carlbook,stoplight}.

\subsection{Conclusions}

In our present paper we proposed one of the
potentially most useful classifications of the unitary quantum
processes of phase transitions mediated by the Kato's
exceptional points. Emphasis
has been put upon the exact solvability
of the underlying, AHO-based benchmark models.
One of the main merits of our
constructive classification is that the exact solvability
of these $N$ by $N$ matrix loss-of-observability
benchmark models
is guaranteed at an arbitrarily large matrix dimension $N$.

From the point of view of physics,
the success of our construction
appeared to be a consequence
of our fortunate initial choice of the family of
the phenomenological real-matrix Hamiltonians which
combined the harmonic-oscillator-inspired
equidistance of their diagonal matrix elements with the
entirely general but extreme,
maximally non-Hermitian (i.e., antisymmetric)
but ${\cal PT}-$symmetric
full-matrix form of the anharmonicities.
In such a setup
our search for a complete set of the EPN-supporting
benchmark models
which would remain non-numerical
at any $N$ was entirely straightforward.
A constructive
classification
of the closed quantum systems
living in a vicinity of their
EPN-mediated phase transition has been achieved.
We believe that in the nearest future
the direct-sum nature of our models
may be expected to facilitate the task of a guarantee of
stability with respect to small perturbations
at larger $N$ and $K$.

The most welcome serendipitious byproduct of our construction
is two-fold. First,
we managed to overcome the widespread belief that
in the study of the EPN-related phenomena
the techniques
(as sampled, say, in Ref.~\cite{anomalous})
must necessarily remain
numerical and
restricted to the systems with
the smallest matrix dimensions $N$.
Second,
we found that an increase of
the number of diagonals
in $H^{(N)}(\lambda)$ does not play any significant role.
We have shown,
in particular, that
the next-to-tridiagonal (i.e.,
pentadiagonal) choice of
matrices does not help too much.
We may conclude that the number of diagonals
is inessential, and that at $N$ as
small as six, and for the geometric multiplicity
as small as $K=3$, one needs
as many as eleven
diagonals
to simulate the anomalous
behavior of the system near its EP6 singularity.

In the language of mathematics
we were forced to introduce a certain new version of
the notion of partitioning. Starting from the
idea of the uniqueness of a simplified representation
of our specific matrices by their main diagonals
we took into consideration the
availability of tridiagonal building-block
sub-Hamiltonians. Then we
added the concept of a direct sum
of these AHO-based sub-Hamiltonians
forming a
decomposition of the
main $N$ by $N$ Hamiltonian.
This led us to the development of a
special partitioning technique which is described
in Appendix A. Along these lines we arrived,
at an arbitrary preselected matrix dimension $N$,
at
an exhaustive characterization and ``numbering''
of all of the eligible
EPN-related setups.
In a final step,
this result opened the way to the guarantee of the
completeness of the
classification.

We can summarize that
in the context of the study
and classification of quantum systems
in a small vicinity
of the instant of phase transition,
our key idea of decomposition of
certain toy-model $N$ by $N$ matrix Hamiltonians
into direct sums of
their AHO-matrix components
proved efficient and productive.
It was shown to lead to an exhaustive list of
multidiagonal AHO-related matrices $H^{(N)}(\lambda)$
forming an infinite family of
benchmark EPN-supporting Hamiltonians.
These representatives of the systems near a collapse
happened to be
unexpectedly user-friendly, especially due to their
characteristic sparse-matrix
structure.

\section*{Appendix A: Non-equivalent EPN scenarios}

In the present AHO-based phenomenological models
numbered by the Hamiltonian-matrix dimensions $N=(1), 2,3,\ldots$,
the counts of the eligible non-equivalent
EPN-related dynamical scenarios
form a sequence
 \be
 a(N) = (0), 1, 1, 2, 3, 3, 6, 4, 11, 6, 17, 7, 32, 8, 39, 13, 40,
  \ldots\,.
 \ee
The evaluation of the sequence
is important and useful for at least two physical reasons. First,
beyond the smallest $N$,
it enables us to check the completeness
of the EPN-related
dynamical alternatives.
Second,
the asymptotically exponential growth of the sequence
indicates that at the larger $N$s, the
menus of the EPN-supporting toy models will be dominated
by the anomalous,  multidiagonal $K>1$ Hamiltonians.

Besides that, the properties of the sequence
are of an independent mathematical interest.
First of all we notice that our sequence
seems composed of the two apparently simpler,
monotonously increasing integer subsequences.
They have to be discussed separately.

\subsection*{A.1. Subsequence of $a(N)$ with even $N=2J$, $J=1,2,\ldots$.}

The values of the subsequence
 \be
 b(J)=a(2J) =  1, 2, 3, 4, 6, 7, 8, 13, 14, 15, 25, 26, 33, 50, \ldots
 \ee
may be generated by the algorithm described in \cite{oeiseven}.
Recalling this source
(available, temporarily, under the identification number
A336739)
let us summarize a few key
mathematical features of the (sub)sequence.

\begin{defn}
\label{ace}
The quantity b(n) is the number of	
decompositions of {B}(n,1) into disjoint unions of {B}(j,k)
where {B}(j,k) is the set of numbers \{(2\,i-1)\,(2\,k-1), 1
$\leq$ i $\leq$ j \}.
\end{defn}


 \noindent
It may be instructive to display a few examples:
	
{B}(n,1) are the sets \{1\}, \{1,3\}, \{1,3,5\}, \{1,3,5,7\}, ...,

{B}(n,2) are the sets \{3\}, \{3,9\}, \{3,9,15\}, \{3,9,15,21\}, ...,

{B}(n,3) are the sets \{5\}, \{5,15\}, \{5,15,25\}, \{5,15,25,35\}, ...,

 \noindent
etc. Thus, one can conclude that
there are two decompositions of {B}(2,1) = \{1,3\}, viz.,
trivial {B}(2,1) and nontrivial {B}(1,1) + {B}(1,2) = \{1\} + \{3\}.
Similarly, the complete list of the $a(5) = 6$
decompositions of \{1,3,5,7,9\}
is as follows:

  \{\{1,3,5,7,9\}\},

  \{\{1,3,5,7\}, \{9\}\},

  \{\{1,3,5\}, \{7\}, \{9\}\},

  \{\{1,3\}, \{5\}, \{7\}, \{9\}\},

  \{\{1\}, \{3\}, \{5\}, \{7\}, \{9\}\},

  \{\{3,9\}, \{1\}, \{5\}, \{7\}\}.

 \noindent
We should add that
the
notation used in definition \ref{ace} of quantities  {B}(j,k) is
mathematically optimal.
For the purposes of our present paper, nevertheless, it is
necessary to recall the equivalence of every {B}(j,k)
to one of the present boxed symbols. For example,
in place of
 {B}(3,1) = \{1,3,5\} we should write ${\cal B}$[3,1] =
  \fbox{-5,-3,-1,1,3,5}, etc.
Thus, definition \ref{ace} could be modified as follows.

\begin{defn}
\label{bece}
The quantity b(n) is the number of different decompositions
of ${\cal B}$[n,1] into
unions of ${\cal B}$[j,k] where ${\cal B}$[J,K] is defined as
the boxed symbol \\
. \hspace{2cm}
 \fbox{(2K-1)(1-2J),(2K-1)(3-2J),(2K-1)(5-2J), ... ,(2K-1)(2J-1)}.
\end{defn}

\subsection*{A.2.  Subsequence of $a(N)$ with odd $N=2J+1$, $J=1,2,\ldots$.}


The values of the subsequence
 \be
 c(J)=a(2J+1) =   1, 3, 6, 11, 17, 32, 39, 40, 56,  \ldots
 \ee
may be found discussed
 in \cite{oeisodd}.
 Using this source
let us summarize a few key aspects of this sequence
which carries, temporarily, the identification number A335631.

\begin{defn}
The quantity c(n) is the number of decompositions of {C}(n,1) into
disjoint unions of {C}(j,k) and G(q,r) where {C}(j,k) is the set of
numbers \{i\,k, 0 $\leq$ i  $\leq$ j \} and where G(q,r) is the
set of numbers \{(2\,p-1)\,r, 1  $\leq$ p  $\leq$ q \}.
\end{defn}

 \noindent
In a more explicit manner let us point out that
	
{C}(n,1) are the sets \{0,1\}, \{0,1,2\}, \{0,1,2,3\}, \{0,1,2,3,4\}, ...,

{C}(n,2) are the sets \{0,2\}, \{0,2,4\}, \{0,2,4,6\}, \{0,2,4,6,8\}, ...,

{C}(n,3) are the sets \{0,3\}, \{0,3,6\}, \{0,3,6,9\}, \{0,3,6,9,12\}, ...,

 \noindent
etc., and that

G(n,1) are the sets \{1\}, \{1,3\}, \{1,3,5\}, \{1,3,5,7\}, ...,

G(n,2) are the sets \{2\}, \{2,6\}, \{2,6,10\}, \{2,6,10,14\}, ...,

G(n,3) are the sets \{3\}, \{3,9\}, \{3,9,15\}, \{3,9,15,21\}, ...,

 \noindent
etc. We can say that
$a(2) = 3$ because the decompositions of {C}(2,1) = \{0,1,2\}
involve not only the trivial copy {C}(2,1) but also the nontrivial
formulae
{C}(1,2) + G(1,1) = \{0,2\} + \{1\} and
{C}(1,1) + G(1,2) = \{0,1\} + \{2\}.
Similarly: why $a(3) = 6$?
Because the decompositions of \{0,1,2,3\} are as follows:

  \{\{0,1,2,3\}\},

  \{\{0,1,2\}, \{3\}\},

  \{\{0,1\}, \{2\}, \{3\}\},

  \{\{0,2\}, \{1,3\}\},

  \{\{0,2\}, \{1\}, \{3\}\},

  \{\{0,3\}, \{1\}, \{2\}\}.

 \noindent
The one-to-one correspondence and the translation of this notation
to our present boxed-symbol language is again obvious, fully
analogous to
the one described in the preceding Appendix A.1.


\newpage

\end{document}